\definecolor{Yellow}{rgb}{1, 1, 0}
\definecolor{VeryLightGray}{gray}{.90}
\definecolor{LightGray}{gray}{.7}
\definecolor{Gray}{gray}{.50}
\definecolor{DarkGray}{gray}{.3}
\definecolor{VeryDarkGray}{gray}{.10}
\newcommand{\tr}{^{\mathrm T}}
\newcommand{\magn}[1]{\left\vert #1 \right\vert}
\newcommand{\bitem}{\item[$\bullet$]}
\newcommand{\df}{\doteq}
\newcommand{\RM}{$\mathsf{RM}$}
\newcommand{\SIMPLEX}[1]{\Delta\left(#1\right)}
\newcommand{\SPROFILE}{\mathbf{\Delta}}
\newcommand{\RAND}[2]{{\sf rand}_{#1}\left[#2\right]}
\begin{document}

\title{Efficient Dynamic Pinning of Parallelized Applications by Distributed Reinforcement Learning\thanks{This work has been partially supported by the European Union grant EU H2020-ICT-2014-1 project RePhrase (No. 644235).}%\thanks{Grants or other notes
%about the article that should go on the front page should be
%placed here. General acknowledgments should be placed at the end of the article.}
}
% \subtitle{Do you have a subtitle?\\ If so, write it here}

\titlerunning{Efficient Dynamic Pinning for Parallelized Applications}        % if too long for running head

\author{Georgios C. Chasparis \and Michael Rossbory
}

%\authorrunning{Short form of author list} % if too long for running head

\institute{Software Competence Center Hagenberg GmbH, Softwarepark 21, A-4232 Hagenberg, Austria \\
\email{\{georgios.chasparis,michael.rossbory\}@scch.at} %, \\ WWW home page: \texttt{http://www.scch.at
}

%\institute{G. Chasparis and M. Rossbory \at
%              Department of Data Analysis Systems \\
%              Software Competence Center Hagenberg \\
%              Softwarepark 21, A-4232 Hagenberg, Austria\\
%              \email{\{georgios.chasparis,michael.rossbory\}@scch.at}           %  \\
%             \emph{Present address:} of F. Author  %  if needed
%           \and
%           S. Author \at
%              second address
%}

%\date{Received: date / Accepted: date}
% The correct dates will be entered by the editor

\maketitle

%====================================================================
\begin{abstract}
% The problem of efficient resource allocation has drawn significant attention in many scientific disciplines due to its direct societal benefits, such as energy savings. Traditional approaches in addressing online resource allocation problems neglect the potential benefit of feedback information available from the running tasks. 
This paper introduces a resource allocation framework specifically tailored for addressing the problem of dynamic placement (or pinning) of parallelized applications to processing units. Under the proposed setup each thread of the parallelized application constitutes an independent decision maker (or agent), which (based on its own prior performance measurements and its own prior CPU-affinities) decides on which processing unit to run next. Decisions are updated recursively for each thread by a resource manager/scheduler which runs in parallel to the application's threads and periodically records their performances and assigns to them new CPU affinities. For updating the CPU-affinities, the scheduler uses a distributed reinforcement-learning algorithm, each branch of which is responsible for assigning a new placement strategy to each thread. According to this algorithm, prior allocations are going to be reinforced in the future proportionally to their prior performance. The proposed resource allocation framework is flexible enough to address alternative optimization criteria, such as maximum average processing speed and minimum speed variance among threads. We demonstrate analytically that convergence to locally-optimal placements is achieved asymptotically. Finally, we validate these results through experiments in Linux platforms. 
\end{abstract}

%%%%%%%%%%%%%%%%%%%%%%%%%%%%%%%%%%%%%%%%%%%%%%%%%%%%%%%%%%%%%%%%%%%%%%%%%%%%%%%%%%%%%%%
% INTRODUCTION
%%%%%%%%%%%%%%%%%%%%%%%%%%%%%%%%%%%%%%%%%%%%%%%%%%%%%%%%%%%%%%%%%%%%%%%%%%%%%%%%%%%%%%%
\section{Introduction}
Resource allocation has become an indispensable part of the design of any engineering system that consumes resources, such as electricity power in home energy management \cite{DeAngelis13}, access bandwidth and battery life in wireless communications \cite{Inaltekin05}, computing bandwidth under certain QoS requirements \cite{Bini11}, computing bandwidth for time-sensitive applications \cite{ChasparisMaggio16},  computing bandwidth and memory in parallelized applications \cite{Brecht93}. 

When resource allocation is performed online and the number, arrival and departure times of the tasks are not known a priori (as in the case of CPU bandwidth allocation), the role of a resource manager (\RM) is to guarantee an \emph{efficient} operation of all tasks by appropriately distributing resources. However, guaranteeing efficiency through the adjustment of resources requires the formulation of a centralized optimization problem (e.g., mixed-integer linear programming formulations \cite{Bini11}), which further requires information about the specifics of each task (i.e., application details). Such information may not be available to neither the \RM\ nor the task itself.

Given the difficulties involved in the formulation of centralized optimization problems, not to mention their computational complexity, feedback from the running tasks in the form of performance measurements may provide valuable information for the establishment of efficient allocations. Such (feedback-based) techniques have recently considered in several scientific domains, such as in the case of application parallelization (where information about the memory access patterns or affinity between threads and data are used in the form of scheduling hints)  \cite{Broquedis10}, or in the case of allocating virtual processors to time-sensitive applications \cite{ChasparisMaggio16}.

To tackle the issues of centralized optimization techniques, resource allocation problems have also been addressed through distributed or game-theoretic optimization schemes. The main goal of such approaches is to address a centralized (\emph{global}) objective for resource allocation through agent-based (\emph{local}) objectives, where, for instance, agents may represent the tasks to be allocated. Examples include the cooperative game formulation for allocating bandwidth in grid computing \cite{Sub08}, the non-cooperative game formulation in the problem of medium access protocols in communications \cite{Tembine09} or for allocating resources in cloud computing \cite{Wei10}. The main advantage of distributing the decision-making process is the considerable reduction in computational complexity (a group of $N$ tasks can be allocated to $m$ resources with $m^N$ possible ways, while a single task may be allocated with only $m$ possible ways). This further allows for the development of online selection rules where tasks/agents make decisions often using current observations of their \emph{own} performance.

This paper proposes a distributed learning scheme specifically tailored for addressing the problem of dynamically assigning/pinning threads of a parallelized application to the available processing units. Prior work has demonstrated the importance of thread-to-core bindings in the overall performance of a parallelized application. For example, \cite{Klug11} describes a tool that checks the performance of each of the available thread-to-core bindings and searches an optimal placement. Unfortunately, the \emph{exhaustive-search} type of optimization that is implemented may prohibit runtime implementation. Reference \cite{Broquedis10} combines the problem of thread scheduling with scheduling hints related to thread-memory affinity issues. These hints are able to accommodate load distribution given information for the application structure and the hardware topology. The HWLOC library is used to perform the topology discovery which builds a hierarchical architecture composed of hardware objects (NUMA nodes, sockets, caches, cores, etc.), and the BubbleSched library \cite{Thibault07} is used to implement scheduling policies. %, where threads may also be organized in groups to exploit data affinities. Hierarchical scheduling is implemented, where low-level \emph{work stealing} \cite{Blumofe94} is used only at neighboring cores so as to maintain data locality, while at the memory-node level, the thread scheduler deals with larger groups of threads. 
A similar scheduling policy is also implemented by \cite{Olivier11}.

This form of scheduling strategies exhibits several disadvantages when dealing with dynamic environments (e.g., varying amount of available resources). In particular, retrieving the exact affinity relations during runtime may be an issue due to the involved information complexity. Furthermore, in the presence of other applications running on the same platform, the above methodologies will fail to identify irregular application behavior and react promptly to such irregularities. Instead, in such dynamic environments, it is more appropriate to consider learning-based optimization techniques where the scheduling policy is being updated based on performance measurements from the running threads. Through such measurement- or learning-based scheme, we can a) \emph{reduce information complexity} (i.e., when dealing with a large number of possible thread/memory bindings) since only performance measurements need to be collected during runtime, and b) \emph{adapt to uncertain/irregular application behavior}.

To this end, this paper proposes a dynamic (algorithmic-based) scheme for optimally allocating threads of a parallelized application into a set of available CPU cores. The proposed methodology implements a distributed reinforcement learning algorithm (executed in parallel by a resource manager/scheduler), according to which each thread is considered an independent agent making decisions over its own CPU-affinities. The proposed algorithm requires minimum information exchange, that is only the performance measurements collected from each running thread. Furthermore, it exhibits adaptivity and robustness to possible irregularities in the behavior of a thread or to possible changes in the availability of resources. We analytically demonstrate that the reinforcement-learning scheme asymptotically learns a locally-optimal allocation, while it is flexible enough to accommodate several optimization criteria. We also demonstrate through experiments in a Linux platform that the proposed algorithm outperforms the scheduling strategies of the operating system with respect to completion time.

The paper is organized as follows. Section~\ref{sec:framework} describes the overall framework and objective. Section~\ref{sec:MultiagentFormulation} introduces the concept of multi-agent formulations and discusses their advantages. Section~\ref{sec:ReinforcementLearning} presents the proposed reinforcement-learning algorithm for dynamic placement of threads and Section~\ref{sec:ConvergenceAnalysis} presents its convergence analysis. Section \ref{sec:Experiments} presents experiments of the proposed algorithm in a Linux platform and comparison tests with the operating system's response. Finally, Section~\ref{sec:Conclusions} presents concluding remarks.

\textit{Notation:} 
\begin{itemize}
\bitem $|x|$ denotes the Euclidean norm of a vector $x\in\mathbb{R}^{n}$.
\bitem $\mathsf{dist}(x,A)$ denotes the minimum distance from a vector $x\in\mathbb{R}^{n}$ to a set $A\subset\mathbb{R}^{n}$, i.e., $\mathsf{dist}(x,A)\df\inf_{y\in{A}}|x-y|$.
\bitem $\mathcal{B}_{\delta}(A)$ denotes the $\delta$-neighborhood of a set $A\subset\mathbb{R}^{n}$, i.e., $\mathcal{B}_{\delta}(A)\df\{x\in\mathbb{R}^{n}:\mathsf{dist}(x,A)<\delta\}$.
% \bitem For some finite sequence $\{x_1,x_2,...,x_n\}$ in $\mathbb{R}$,
%   define $\col{x_1,x_2,...,x_n}{}$ to be the column vector in $\mathbb{R}^{n}$ with entries $\{x_1,x_2,...,x_n\}$.
%\bitem For any $x\in\mathbb{R}$, define the operator $\NEG{x}$ as follows:
%\begin{equation*}
%\NEG{x} \triangleq \begin{cases}
%x, & x\leq{0} \cr
%0, & x > {0}.
%\end{cases}
%\end{equation*}
% \bitem For any $x\in\mathbb{R}^{n}$ and set $A\subset{\mathbb{R}^{n}}$, define ${\rm dist}(x,A)\df\inf_{y\in{A}}\|x-y\|,$ where $\|\cdot\|$ denotes the Euclidean norm.
\bitem For some finite set $A$, $\magn{A}$ denotes the cardinality of $A$.
\bitem The probability simplex of dimension $n$ is denoted $\SIMPLEX{n}$ and defined as 
$$\SIMPLEX{n}\df\Big\{x=(x_1,...,x_n)\in[0,1]^n:\sum_{i=1}^nx_i = 1\Big\}.$$
\bitem $\Pi_{\SIMPLEX{n}}[x]$ is the projection of a vector $x\in\mathbb{R}^{n}$ onto $\SIMPLEX{n}$.
\bitem $e_j\in\mathbb{R}^{n}$ denotes the unit vector whose $j$th entry is equal to 1 while all other entries are zero;
%\bitem The vertices of $\SIMPLEX{n}$ will be denoted by $\VERTSIMPLEX{n}$, i.e., $$\VERTSIMPLEX{n}\df \left\{e_j\in\mathbb{R}^{n}, 1\leq j \leq n\right\};$$
\bitem For a vector $\sigma\in\SIMPLEX{n}$, let $\RAND{\sigma}{a_1,...,a_n}$ denote the random selection of an element of the set $\{a_1,...,a_n\}$ according to the distribution $\sigma$;
%\bitem ${\sf rand}_{\rm unif}[a_1,...,a_n]$ denotes the random selection of an element of the set $\{a_1,...,a_n\}$ according to the uniform distribution $\sigma=(\nicefrac{1}{n},...,\nicefrac{1}{n})$.
\end{itemize}

%%%%%%%%%%%%%%%%%%%%%%%%%%%%%%%%%%%%%%%%%%%%%%%%%%%%%%%%%%%%%%%
\section{Problem Formulation \& Objective} 
\label{sec:framework}

%~~~~~~~~~~~~~~~~~~~~~~~~~~~~~~~
\subsection{Framework}
\label{sec:RMapp}

We consider a resource allocation framework for addressing the problem of dynamic pinning of parallelized applications. In particular, we consider a number of threads $\mathcal{I}=\{1,2,...,n\}$ resulting from a parallelized application. These threads need to be pinned/scheduled for processing into a set of available CPU's $\mathcal{J}=\{1,2,...,m\}$ (not necessarily homogeneous). 

We denote the \emph{assignment} of a thread $i$ to the set of available CPU's by $\alpha_i \in \mathcal{A}_i \equiv \mathcal{J}$, i.e., $\alpha_i$ designates the number of the CPU where this thread is being assigned to. Let also $\alpha=\{\alpha_i\}_i$ denote the \emph{assignment profile}.

Responsible for the assignment of CPU's into the threads is the Resource Manager (\RM), which periodically checks the prior performance of each thread and makes a decision over their next CPU placements so that a (user-specified) objective is maximized. Throughout the paper, we will assume that:
\begin{itemize}
\item[(a)] The internal properties and details of the threads are not known to the \RM{}. Instead, the \RM\ may only have access to measurements related to their performance (e.g., their processing speed).
\item[(b)] Threads may not be idled or postponed. Instead, the goal of the \RM{} is to assign the \emph{currently} available resources to the \emph{currently} running threads. 
\item[(c)] Each thread may only be assigned to a single CPU core.
\end{itemize}

\begin{figure}[t!]
\centering
\begin{tikzpicture}[scale=3.5,
    axis/.style={thick, ->, >=stealth'},
    important line/.style={ultra thick},
    thin line/.style={thin},
    dashed line/.style={dashed, thin},
    arrow/.style={ultra thick, ->, >=stealth', shorten <=2pt, shorten >=2pt},
    every node/.style={color=black}
    ]
    % -------------
    % first plot
    %\node[draw, ellipse, fill=LightGray] (User) at (0.8,1.3) {User(s)};
    \node[draw, rectangle, fill=LightGray] (RM) at (0.75,0) {\begin{minipage}{0.6\textwidth}\centering Resource Manager \textsf{(RM)} \\ 
    $\alpha^* = (\alpha_1^*,\alpha_2^*,...,\alpha_n^*) \doteq \arg\max_{\alpha\in\mathcal{A}}f(\alpha,w)$\end{minipage}};
   
    %\draw[] (0,0)  circle[radius=0.1cm] coordinate (a1) node[]{$a_1$};
    \node[draw, circle,fill=VeryLightGray] (T1) at (0.0,.7) {$T_1$};
    \node[draw, circle,fill=VeryLightGray] (T2) at (0.6,.7) {$T_2$};
    \node[draw, circle,fill=VeryLightGray] (T3) at (1.3,0.7) {$T_n$};
    \node (dots) at (1,0.7) {$\mathbf{\cdots}$};
    %\draw[draw, circle] (a5) at (0.5,-0.2) {$a_5$};
    
    \node[draw, circle, fill=VeryLightGray] (CPU1) at (0,1.4) {CPU $1$};
    \node[draw, circle, fill=VeryLightGray] (CPU2) at (0.4,1.4) {CPU $2$};
    \node[draw, circle, fill=VeryLightGray] (CPU3) at (0.8,1.4) {CPU $3$};
    \node (dots) at (1.2,1.4) {$\mathbf{\cdots}$};
    \node[draw, circle, fill=VeryLightGray] (CPUm) at (1.6,1.4) {CPU $m$};
    
    \node[] (RMT1) at (.0,.1) {};
    \node[] (RMT2) at (.6,.1) {};
    \node[] (RMT3) at (1.3,.1) {};
    
    %\draw[arrow] (a1) to [out=70,in=-90] (a2);
    %\draw[arrow] (T1) to [out=-120,in=120] (RMT1);\node at (-0.2,0.2) {{\small $\tilde{u}_1(k)$}};
    \draw[arrow] (RMT1) to [out=90,in=-90] (T1);\node at (0.1,0.45) {{\small $\alpha_1^*$}};
    %\draw[arrow] (T2) to [out=-120,in=120] (RMT2);\node at (0.38,0.2) {{\small $\tilde{u}_2(k)$}};
    \draw[arrow] (RMT2) to [out=90,in=-90] (T2);\node at (0.7,0.45) {{\small $\alpha_2^*$}};
    %\draw[arrow] (T3) to [out=-120,in=120] (RMT3);\node at (1.1,0.2) {{\small $\tilde{u}_n(k)$}};
    \draw[arrow] (RMT3) to [out=90,in=-90] (T3);\node at (1.4,0.45) {{\small $\alpha_n^*$}};
    
    \node[] (w1) at (.0,2) {};
    \draw[arrow] (w1) to [out=-90,in=90] (CPU1);\node at (0.1,1.9) {$w_1$};
    \node[] (w2) at (.4,2) {};
    \draw[arrow] (w2) to [out=-90,in=90] (CPU2);\node at (0.5,1.9) {$w_2$};
    \node[] (w3) at (.8,2) {};
    \draw[arrow] (w3) to [out=-90,in=90] (CPU3);\node at (0.9,1.9) {$w_3$};
    \node[] (wm) at (1.6,2) {};
    \draw[arrow] (wm) to [out=-90,in=90] (CPUm);\node at (1.7,1.9) {$w_m$};
    
    \node at (-0.1,0.86) {{\small $\alpha_1^*$}};
    \node at (0.48,0.86) {{\small $\alpha_2^*$}};
    \node at (1.18,0.86) {{\small $\alpha_n^*$}};
    
    %\draw[arrow] (User) to [out=180,in=90] (T1);\node at (0.0,1.1) {$\lambda_1,d_1(k)$};
    %\draw[arrow] (User) to [out=-120,in=90] (T2);\node at (0.50,1.1) {$\lambda_2,d_2(k)$};
    %\draw[arrow] (User) to [out=0,in=90] (T3);\node at (1.13,1.1) {$\lambda_n,d_n(k)$};
    
    \draw[arrow] (T1) to [out=90,in=-90] (CPU2);
    \draw[arrow] (T2) to [out=90,in=-90] (CPU1);
    \draw[arrow] (T3) to [out=90,in=-90] (CPU3);

\end{tikzpicture}
\caption{Schematic of \emph{static} resource allocation framework.}
\label{fig:StaticApproach}
\end{figure}
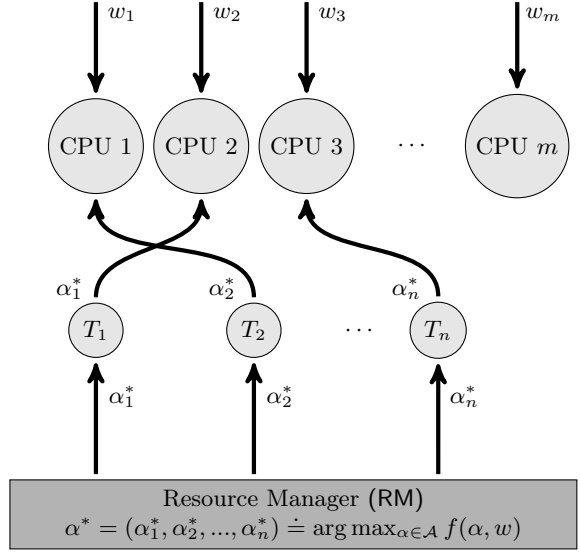

%~~~~~~~~~~~~~~~~~~~~~~~~~~~~~~~~~~~~~~~
\subsection{Static optimization \& issues} 	\label{sec:StaticOptimization}

The selection of a centralized objective is open-ended. In the remainder of the paper, we will consider two main possibilities of a centralized objective in order to emphasize the flexibility of the introduced methodology to address alternative criteria. In the first case, the centralized objective will correspond to maximizing the average processing speed. In the second case, it will correspond to maximizing the average processing speed while maintaining a balance between the processing speeds of the running threads. 

Let $v_i=v_i(\alpha,w)$ denote the processing speed of thread $i$ which depends on both the overall assignment $\alpha$, as well as exogenous parameters aggregated within $w$. The exogenous parameters $w$ summarize, for example, the impact of other applications running on the same platform or other irregularities of the applications. Then, the previously mentioned centralized objectives may take on the following form:
\begin{eqnarray} \label{eq:CentralizedObjective}
\max_{\alpha\in\mathcal{A}} & f(\alpha,w), 
\end{eqnarray}
where 
\begin{enumerate}
\item[(O1)] $f(\alpha,w) \doteq \sum_{i=1}^{n} v_i/n$, corresponds to the average processing speed of all threads;
\item[(O2)] $f(\alpha,w) \doteq \sum_{i=1}^{n} [v_i - \gamma(v_i-\sum_{j\in\mathcal{I}}v_j/n)^2]/n$, for some $\gamma>0$, corresponds to the average processing speed minus a penalty that is proportional to the speed variance among threads.
\end{enumerate}

Any solution to the optimization problem (\ref{eq:CentralizedObjective}) would correspond to an \emph{efficient assignment}. Figure~\ref{fig:StaticApproach} presents a schematic of a \emph{static} resource allocation framework sequence of actions where the centralized objective (\ref{eq:CentralizedObjective}) is solved by the \RM\ once and then it communicates the optimal assignment to the threads.

However, there are two significant issues when posing an optimization problem in the form of (\ref{eq:CentralizedObjective}). In particular,
\begin{enumerate}
\item the function $v_i(\alpha,w)$ is unknown and it may only be evaluated through measurements of the processing speed, denoted $\tilde{v}_i$;
\item the exogenous influence $w$ is unknown and may vary with time, thus the optimal assignment may not be fixed with time.
\end{enumerate}

In conclusion, the static resource allocation framework of Figure~\ref{fig:StaticApproach} presented in (\ref{eq:CentralizedObjective}) is not easily implementable.

%~~~~~~~~~~~~~~~~~~~~~~~~~~~~~~~~~~~~~~~
\subsection{Measurement- or learning-based optimization} 	\label{sec:MeasurementBasedOptimization}

We wish to target a \emph{static} objective of the form (\ref{eq:CentralizedObjective}) through a \emph{measurement-based} (or \emph{learning-based}) optimization approach. According to such approach, the \RM\ reacts to measurements of the objective function $f(\alpha,w)$, periodically collected at time instances $k=1,2,...$ and denoted $\tilde{f}(k)$. In the case of objective (O1), $\tilde{f}(k)\df\sum_{i=1}^{n}\tilde{v}_i(k)/n$. Given these measurements and the current assignment $\alpha(k)$ of resources, the \RM\ selects the next assignment of resources $\alpha(k+1)$ so that the measured objective approaches the true optimum of the unknown function $f(\alpha,w)$. In other words, the \RM\ employs an update rule of the form:
\begin{equation}	\label{eq:GenericUpdateRule}
\{(\tilde{v}_i(1),\alpha_i(1)),...,(\tilde{v}_i(k),\alpha_i(k))\}_i\mapsto\{\alpha_i(k+1)\}_i
\end{equation}
according to which prior pairs of measurements and assignments for each thread $i$ are mapped into a new assignment $\alpha_i(k+1)$ that will be employed during the next evaluation interval.

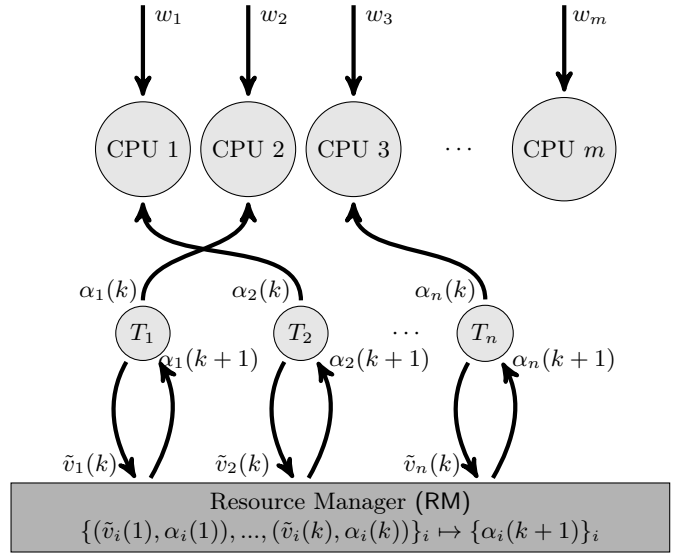
\begin{figure}[th!]
\centering
\begin{tikzpicture}[scale=3.5,
    axis/.style={thick, ->, >=stealth'},
    important line/.style={ultra thick},
    thin line/.style={thin},
    dashed line/.style={dashed, thin},
    arrow/.style={ultra thick, ->, >=stealth', shorten <=2pt, shorten >=2pt},
    every node/.style={color=black}
    ]
    % -------------
    % first plot
    %\node[draw, ellipse, fill=LightGray] (User) at (0.8,1.3) {User(s)};
    \node[draw, rectangle, fill=LightGray] (RM) at (0.75,0) {\begin{minipage}{0.7\textwidth}\centering Resource Manager \textsf{(RM)} \\ 
    $\{(\tilde{v}_i(1),\alpha_i(1)),...,(\tilde{v}_i(k),\alpha_i(k))\}_i\mapsto\{\alpha_i(k+1)\}_i$\end{minipage}};
   
    %\draw[] (0,0)  circle[radius=0.1cm] coordinate (a1) node[]{$a_1$};
    \node[draw, circle,fill=VeryLightGray] (T1) at (0.0,.7) {$T_1$};
    \node[draw, circle,fill=VeryLightGray] (T2) at (0.6,.7) {$T_2$};
    \node[draw, circle,fill=VeryLightGray] (T3) at (1.3,0.7) {$T_n$};
    \node (dots) at (1,0.7) {$\mathbf{\cdots}$};
    %\draw[draw, circle] (a5) at (0.5,-0.2) {$a_5$};
    
    \node[draw, circle, fill=VeryLightGray] (CPU1) at (0,1.4) {CPU $1$};
    \node[draw, circle, fill=VeryLightGray] (CPU2) at (0.4,1.4) {CPU $2$};
    \node[draw, circle, fill=VeryLightGray] (CPU3) at (0.8,1.4) {CPU $3$};
    \node (dots) at (1.2,1.4) {$\mathbf{\cdots}$};
    \node[draw, circle, fill=VeryLightGray] (CPUm) at (1.6,1.4) {CPU $m$};
    
    \node[] (RMT1) at (.0,.1) {};
    \node[] (RMT2) at (.6,.1) {};
    \node[] (RMT3) at (1.3,.1) {};
    
    %\draw[arrow] (a1) to [out=70,in=-90] (a2);
    \draw[arrow] (T1) to [out=-120,in=120] (RMT1);\node at (-0.2,0.2) {{\small $\tilde{v}_1(k)$}};
    \draw[arrow] (RMT1) to [out=60,in=-60] (T1);\node at (0.25,0.6) {{\small $\alpha_1(k+1)$}};
    \draw[arrow] (T2) to [out=-120,in=120] (RMT2);\node at (0.38,0.2) {{\small $\tilde{v}_2(k)$}};
    \draw[arrow] (RMT2) to [out=60,in=-60] (T2);\node at (0.9,0.6) {{\small $\alpha_2(k+1)$}};
    \draw[arrow] (T3) to [out=-120,in=120] (RMT3);\node at (1.1,0.2) {{\small $\tilde{v}_n(k)$}};
    \draw[arrow] (RMT3) to [out=60,in=-60] (T3);\node at (1.6,0.6) {{\small $\alpha_n(k+1)$}};
    
    \node at (-0.13,0.86) {{\small $\alpha_1(k)$}};
    \node at (0.45,0.86) {{\small $\alpha_2(k)$}};
    \node at (1.15,0.86) {{\small $\alpha_n(k)$}};
    
    \node[] (w1) at (.0,2) {};
    \draw[arrow] (w1) to [out=-90,in=90] (CPU1);\node at (0.1,1.9) {$w_1$};
    \node[] (w2) at (.4,2) {};
    \draw[arrow] (w2) to [out=-90,in=90] (CPU2);\node at (0.5,1.9) {$w_2$};
    \node[] (w3) at (.8,2) {};
    \draw[arrow] (w3) to [out=-90,in=90] (CPU3);\node at (0.9,1.9) {$w_3$};
    \node[] (wm) at (1.6,2) {};
    \draw[arrow] (wm) to [out=-90,in=90] (CPUm);\node at (1.7,1.9) {$w_m$};
    
    %\draw[arrow] (User) to [out=180,in=90] (T1);\node at (0.0,1.1) {$\lambda_1,d_1(k)$};
    %\draw[arrow] (User) to [out=-120,in=90] (T2);\node at (0.50,1.1) {$\lambda_2,d_2(k)$};
    %\draw[arrow] (User) to [out=0,in=90] (T3);\node at (1.13,1.1) {$\lambda_n,d_n(k)$};
    
    \draw[arrow] (T1) to [out=90,in=-90] (CPU2);
    \draw[arrow] (T2) to [out=90,in=-90] (CPU1);
    \draw[arrow] (T3) to [out=90,in=-90] (CPU3);

\end{tikzpicture}
\caption{Schematic of \emph{dynamic} resource allocation framework.}
\label{fig:DynamicApproach}
\end{figure}

The overall framework is illustrated in Figure~\ref{fig:DynamicApproach} describing the flow of information and steps executed. In particular, at any given time instance $k=1,2,...$, each thread $i$ communicates to the \RM\ its current processing speed $\tilde{v}_i(k)$. Then the \RM\ updates the assignments for each thread $i$, $\alpha_i(k+1)$, and communicates this assignment to them.

%\begin{figure}[t!]
%\centering
%\input{./figures/allocation.tex}
%\caption{Schematic of resource allocation evolution.}
%\label{fig:allocation}
%\end{figure}
%\begin{figure}[t!]
%\centering
%\input{./figures/servicelevel.tex}
%\caption{Schematic of resource allocation evolution.}
%\label{fig:servicelevel}
%\end{figure}

%~~~~~~~~~~~~~~~~~~~~~~~~~~~~~~~~~~~~~~~
\subsection{Distributed learning}	\label{sec:DistributedOptimization}

Parallelized applications consist of multiple threads that can be controlled independently with respect to their CPU affinity (at least in Linux machines). Recently developed performance-recording tools (e.g., PAPI \cite{Mucci99}) also allows for a real-time collection of performance counters during the execution time of a thread. Thus, decisions over the assignment of CPU affinities can be performed independently for each thread, allowing for the introduction of a \emph{distributed learning} framework. Under such a framework, the \RM\ treats each thread as an independent decision maker and provides each thread with an independent decision rule. 

A distributed learning approach (i) \emph{reduces computation complexity}, since each thread has only $m$ available choices (instead of $m^{N}$ available group choices), and (ii) \emph{allows for an immediate response to changes observed in the environment} (e.g., available computing bandwidth), thus increasing the adaptivity and robustness of the resource allocation mechanism. For example, if the load of the $j$th CPU core increases (captured through the exogenous parameters $w_j$), then the thread(s) currently running on CPU $j$ may immediately react to this change without necessarily altering the assignment of the remaining threads. 

Such immediate reaction to changes in the environment constitutes a great advantage in comparison to centralized schemes. In the absence of an explicit form of the centralized objective (\ref{eq:CentralizedObjective}), a centralized framework would require a testing period over which all possible assignments are tested over an evaluation period and then compared with respect to their performance. When all possible assignments have been tested and evaluated, then the best one can be selected. Even if such optimization is repeated often, it is obvious that such \emph{exhaustive-search} approach will suffer from slow responses to possible changes in the environment.

It is also evident that the large evaluation period required by an exhaustive-search framework cannot provide any performance guarantees during the evaluation phase. In particular, since all alternative assignments need to be tested over some evaluation interval, bad assignments also have to be tried and evaluated. This may have an unpredictable impact in the overall performance of the application, thus reducing the impact of the optimization itself.

On the other hand, distributed learning schemes can be designed to allow only for small variations in the current assignment. For example, threads may experiment independently for alternative CPU assignments, however the frequency of such experimentations can be tuned independently for each thread. At the same time, an experimentation that leads to a worse assignment may always be reversed by the thread performing this experimentation, thus maintaining good performance throughout the execution time. Hence, distributed learning may allow for (iii)  \emph{a more direct and careful experimentation of the alternative options}. 

At the same time, distributed learning schemes can be designed to (iv) \emph{gradually approach at least locally optimum assignments}, which include all solutions to the static centralized optimization (\ref{eq:CentralizedObjective}). Thus, such schemes may provide guarantees over the performance of the overall parallelized application.

Points (i)--(iv) discussed above constitute the main advantages of a distributed learning scheme compared to a centralized approach.

%~~~~~~~~~~~~~~~~~~~~~~~~~~~~~~~~~~~~~~~
\subsection{Objective}	\label{sec:Objective}

The objective in this paper is to address the problem of adaptive or dynamic pinning through a distributed learning framework. Each thread will constitute an independent decision maker or agent, thus naturally introducing a multi-agent formulation. Each thread selects its own CPU assignments independently using its own preference criterion (although the necessary computations for such selection are executed by the \RM). 

The goal is to design a preference criterion and a selection rule for each thread, so that when each thread tries to maximize its own (\emph{local}) criterion then certain guarantees can be achieved regarding the overall (\emph{global}) performance of the parallelized application. Furthermore, the selection criterion for each thread should be adaptive and robust to possible changes observed in the environment (e.g., the resource availability).

In the following sections, we will go through the design for such a distributed scheme, and we will provide guarantees with respect to its asymptotic behavior.

%%%%%%%%%%%%%%%%%%%%%%%%%%%%%%%%%%%%%%%%%%%%%%%%%%%%%%%%%%%%%%%%%%%%%%%%%%%%%%%%%%%%%%%%%%%%%
\section{Multi-Agent Formulation}	\label{sec:MultiagentFormulation}

The first step towards a distributed learning scheme is the decomposition of the decision making process into multiple decision makers (or agents). Naturally, in the problem of placing threads of a parallelized application into a set of available processing units, a thread may naturally constitute an independent decision maker. 

%~~~~~~~~~~~~~~~~~~~~~~~~~~~~~~~~~~~~~~~~~~~~~~~~
\subsection{Strategy}	\label{sec:Strategy}

Since each agent (or thread) selects actions independently, we generally assume that each agent's action is a realization of an independent discrete random variable. Let $\sigma_{ij}\in[0,1]$, $j\in\mathcal{A}_i$, denote the probability that agent $i$ selects its $j$th action in $\mathcal{A}_i$. If $\sum_{j=1}^{\magn{\mathcal{A}_i}}\sigma_{ij}=1$, then $\sigma_i\doteq(\sigma_{i1},...,\sigma_{i\magn{\mathcal{A}_i}})$ is a probability distribution over the set of actions $\mathcal{A}_i$ (or \emph{strategy} of agent $i$). Then $\sigma_i\in\SIMPLEX{\magn{\mathcal{A}_i}}$. To provide an example, consider the case of 3 available CPU cores, i.e., $\mathcal{A}_i=\{1,2,3\}$. In this case, the strategy $\sigma_i\in\SIMPLEX{3}$ of thread $i$ may take the following form:
\begin{equation*}
\sigma_{i} = \left(\begin{array}{c}
0.2 \\
0.5 \\
0.3
\end{array}\right),
\end{equation*}
such that $20\%$ corresponds to the probability of assigning itself to CPU core $1$, $50\%$ corresponds to the probability of assigning itself to CPU core $2$ and $30\%$ corresponds to the probability of assigning itself to CPU core $3$. Briefly, the assignment selection will be denoted by $$\alpha_i = \RAND{\sigma_i}{\mathcal{A}_i}.$$

We will also use the term \emph{strategy profile} to denote the combination of strategies of all agents $\sigma=(\sigma_1,...,\sigma_n)\in\SPROFILE$ where $\SPROFILE\doteq\SIMPLEX{\magn{\mathcal{A}_1}}\times ... \times \SIMPLEX{\magn{\mathcal{A}_n}}$ is the set of strategy profiles.

Note that if $\sigma_i$ is a unit vector (or a vertex of $\SIMPLEX{\magn{\mathcal{A}_i}}$), say $e_j$, then agent $i$ selects its $j$th action with probability one. Such a strategy will be called \emph{pure strategy}. Likewise, a \emph{pure strategy profile} is a profile of pure strategies. We will also use the term \emph{mixed strategy} to denote a strategy that is not pure.

%~~~~~~~~~~~~~~~~~~~~~~~~~~~~~~~~~~~~~~~~~~~~~~~~
\subsection{Utility function \& expected payoff}	\label{sec:UtilityFunction}

A cornerstone in the design of any measurement-based algorithm is the \emph{preference criterion} or \emph{utility function} $u_i$ for each thread $i\in\mathcal{A}$. The utility function captures the benefit of a decision maker (thread) as resulting from the assignment profile $\alpha$ selected by all threads, i.e., it represents a function of the form $u_i:\mathcal{A}\to\mathbb{R}$. Often, we may decompose the argument of the utility function as follows $u_i(\alpha) = u_i(\alpha_i,\alpha_{-i})$, where $-i\doteq\mathcal{I}\backslash{i}$. The utility function introduces a preference relation for each decision maker where $u_i(\alpha_i,\alpha_{-i}) \geq u_{i}(\alpha_i',\alpha_{-i})$ translates to $\alpha_i$ being more desirable/preferable than $\alpha_i'$.

It is important to note that the utility function $u_i$ of each agent/thread $i$ is subject to \emph{design} and it is introduced in order to guide the preferences of each agent. Thus, $u_i$ may not necessarily correspond to a measured quantity, but it could be a function of available performance counters. 

For example, a natural choice for the utility of each thread is its own execution speed $v_i$. Other options may include more egalitarian criteria, where the utility function of each thread corresponds to the overall global objective $f(\alpha,w)$. The definition of a utility function is open-ended.

\subsection{Assignment Game}		\label{sec:AssignmentGame}

Assuming that each thread (or agent) may decide independently on its own CPU placement, so that its preference criterion is maximized, a strategic-form game between the running threads can naturally be introduced. We define it as a strategic interaction or game because the strategy of each thread indirectly influences the performance of the other threads, thus introducing an interdependence between their utility functions. We define the triple $\{\mathcal{I},\mathcal{A},\{u_i\}_i\}$ as an \emph{assignment game}.

\subsection{Nash Equilibria}		\label{sec:NashEquilibria}

Given a strategy profile $\sigma\in\SPROFILE$, the \emph{expected payoff vector} of each agent $i$, $U_i:\SPROFILE\to\mathbb{R}^{\magn{\mathcal{A}_i}}$, can be computed by
\begin{equation}	\label{eq:ExpectedUtility}
U_i(\sigma) \doteq \sum_{\alpha_i\in\mathcal{A}_i}e_{\alpha_i}\sum_{\alpha_{-i}\in\mathcal{A}_{-i}}\left(\prod_{s\in{-i}}\sigma_{s\alpha_{s}}\right)u_i(\alpha_i,\alpha_{-i}).
\end{equation}
We may think of the $j$th entry of the expected payoff vector $U_i$, denoted $U_{ij}(\sigma)$, as the expected payoff of agent $i$ playing action $j$ at strategy profile $\sigma$. % We denote the profile of expected payoffs by $U(\sigma)=(U_1(\sigma),...,U_n(\sigma))$. 
Finally, let $u_i(\sigma)$ be the \emph{expected payoff} of agent $i$ at strategy profile $\sigma\in\SPROFILE$, which satisfies:
\begin{equation}
u_i(\sigma) = \sigma_i\tr U_i(\sigma).
\end{equation}

\begin{definition}[Nash Equilibrium]	\label{eq:NashEquilibrium}
A strategy profile $\sigma^*=(\sigma_1^*,...,\sigma_n^*)\in\SPROFILE$ is a Nash equilibrium if, for each agent $i\in\mathcal{I}$, 
\begin{equation}	\label{eq:NashCondition}
u_i(\sigma_i^*,\sigma_{-i}^*) \geq u_i(\sigma_i,\sigma_{-i}^*),
\end{equation}
for all $\sigma_i\in\SIMPLEX{\magn{\mathcal{A}_i}}$ with $\sigma_i\neq\sigma_i^*$.
\end{definition}
In other words, a strategy profile is a Nash equilibrium when no agent has the incentive to change this strategy (given that every other agent does not change its strategy). In the special case where for all $i\in\mathcal{I}$, $\sigma_i^*$ is a pure strategy, then the Nash equilibrium is called \emph{pure Nash equilibrium}.

\subsection{Efficient assignments vs Nash equilibria}	\label{sec:EfficientAllocationsVSNashEquilibria}

As we shall see in a forthcoming section, Nash equilibria can be potential attractors of several classes of distributed learning schemes, therefore their relation to the efficient assignments becomes important. 

Nash equilibria correspond to \emph{locally} stable equilibria (with respect to the agents' preferences), i.e., no agent has the incentive to alter its strategy. On the other hand, \emph{efficient assignments} correspond to strategy profiles that maximize the global objective (\ref{eq:CentralizedObjective}). As probably expected, a Nash equilibrium does not necessarily coincide with an efficient assignment and vice versa. Both the utility function of each agent $i$, $u_i$, as well as the global objective $f(\alpha,w)$ are \emph{subject to design}, and their selection determines the relation between Nash equilibria and efficient assignments. % When each agent's objectives are aligned with the global objective, then we should expect that if an agent acts towards maximizing its own utility, then the global objective will also be maximized. 

The \RM\ can be designed to have access to the performances of all threads. Thus, a natural choice for the utility of each thread can be the overall objective function, i.e., 
\begin{equation}	\label{eq:AssignmentGameUtility}
u_i(\alpha) \doteq f(\alpha,w),
\end{equation}
for some given exogenous factor $w$. Note that this definition is independent of whether objective (O1) or (O2) is selected. Such classes of strategic interactions where the utilities of all independent agents are identical, are referred to as \emph{identical interest games} and they are part of a larger family of games, namely \emph{potential games}. It is straightforward to check that in this case, \emph{the set of efficient assignments belongs to the set of Nash equilibria (locally optimal allocations)}. In this case, it is desirable that agents learn to select placements that correspond to Nash equilibria, since a) it provides a minimum performance guarantee (since \emph{all} non-locally optimal placements are excluded), and b) it increases the probability for converging to the solution(s) of the global objective (\ref{eq:CentralizedObjective}).

%\begin{proposition}
%For the identical interest assignment game, where the utility function of each thread $i$ is defined according to either (O1) or (O2), an assignment profile $\alpha^*$ is an efficient assignment (i.e., a solution of (\ref{eq:CentralizedObjective})) if and only if $\alpha^*$ is a Nash equilibrium of the assignment game.
%\end{proposition}

%%%%%%%%%%%%%%%%%%%%%%%%%%%%%%%%%%%%%%%%%%%%%%%%%%%%%%%%%%%%%%%%%%%%%%%%%
\section{Reinforcement Learning (RL)}	\label{sec:ReinforcementLearning}

In the previous section, we introduced utility functions for each thread (or agent), so that the set of efficient assignments (\ref{eq:CentralizedObjective}) are restricted within the set of Nash equilibria. However, as we have already discussed in Section~\ref{sec:MeasurementBasedOptimization}, the utility function of each thread is not known a-priori, rather it may only be measured after the selection of a particular assignment is in place. Thus, the question that naturally arises is \emph{how agents may choose assignments based only on their available measurements so that eventually an efficient assignment is established for all threads}.

To this end, we employ a distributed learning framework (namely, \emph{perturbed learning automata}) that is based on the reinforcement learning algorithm introduced in \cite{ChasparisShamma11_DGA,ChasparisShammaRantzer14}. It belongs to the general class of \emph{learning automata} \cite{Narendra89}.

The basic idea behind reinforcement learning is rather simple. If agent $i$ selects action $j$ at instance $k$ and a favorable payoff results, $u_i(\alpha)$, the action probability $\sigma_{ij}(k)$ is increased and all other entries of $\sigma_i(k)$ are decreased.

The precise manner in which $\sigma_i(k)$ changes, depending on the assignment $\alpha_i(k)$ performed at stage $k$ and the response $u_i(\alpha(k))$ of the environment, completely defines the reinforcement learning model.

%~~~~~~~~~~~~~~~~~~~~~~~~~~~~~~~~~~~~~~~~~~~~~~~~~~~~~~~~~~~
\subsection{Strategy update}	\label{sec:StrategyUpdate}

According to the \emph{perturbed reinforcement learning} \cite{ChasparisShamma11_DGA,ChasparisShammaRantzer14}, the strategy of each thread at any time instance $k=1,2,...$ is as follows:
\begin{equation}	\label{eq:SelectionRule}
\sigma_i(k) = (1-\lambda)x_i(k) - \frac{\lambda}{\magn{\mathcal{A}_i}}
\end{equation}
where $\lambda>0$ corresponds to a perturbation term (or \emph{mutation}) and $x_i(k)$ corresponds to the \emph{nominal strategy} of agent $i$. The nominal strategy is updated according to the following update recursion:
\begin{equation}	\label{eq:StrategyUpdate}
x_i(k+1) = \Pi_{\SIMPLEX{\magn{\mathcal{A}_i}}}\left[x_i(k) + \epsilon u_i(\alpha(k))[e_{\alpha_i(k)}-x_i(k)]\right],
\end{equation}
for some constant step-size $\epsilon>0$. Note that according to this recursion, the new nominal strategy will increase in the direction of the action $\alpha_i(k)$ which is currently selected and it will increase proportionally to the utility received from this selection. For sufficiently small step size $\epsilon>0$ and given that the utility function $u_i(\cdot)$ is uniformly bounded for all action profiles $\alpha\in\mathcal{A}$, the projection operator $\Pi_{\SIMPLEX{\magn{\mathcal{A}_i}}}[\cdot]$ can be skipped.

In comparison to \cite{ChasparisShamma11_DGA,ChasparisShammaRantzer14}, the difference here lies in the use of the constant step size $\epsilon>0$ (instead of a decreasing step-size sequence). This selection increases the adaptivity and robustness of the algorithm to possible changes in the environment. This is because a constant step size provides a fast transition of the nominal strategy from one pure strategy to another. 

%The perturbation term $\zeta_i$ may be different for each agent and its role is rather important in the evolution of the strategy profile $\sigma$. The perturbation term $\zeta_i$ might be constant, or it could be varying with the nominal strategy $x_i$. We admit the following form of the perturbation function:
%\begin{equation}	\label{eq:PerturbationFunction}
%\zeta_i(x_i,u_i,\lambda) \doteq \begin{cases}
%0,	& \|x_i\|_{\infty}<\beta, \\
%\frac{\lambda}{u_i (1-\beta)^2}\left(\|x_{i}\|_{\infty} - \beta\right)^2, & \|x_i\|_{\infty} \geq \beta,
%\end{cases}
%\end{equation}
%for some positive constants $\beta\in(0,1)$ and $\lambda\in(0,1)$. We will generally take $\beta$ to be close to 1, and $\lambda$ close to 0. Essentially, we design the perturbation function to accept larger values (maximum $\lambda$) when the nominal strategy $x_i$ is close to one of the vertices of the probability simplex $\SIMPLEX{\magn{\mathcal{A}_i}}$, and its value smoothly decreases as the strategy $x_i$ moves away from a vertex. The above perturbation function is motivated by \cite{ChasparisShammaRantzer14}. The difference is that the perturbation factor $\lambda$ is also divided by the utility of agent $i$, $u_i$, thus imposing more frequent experimentation when the utility is low. This class of perturbation is of independent interest.

%\begin{figure}[t!]
% \centering
% \includegraphics{./figures/PertubationFunction2.pdf}
% \caption{Perturbation function (\ref{eq:PerturbationFunction}).}
% \label{fig:CandidatePerturbationFunction}
%\end{figure}

Furthermore, the reason for introducing the perturbation term $\lambda$ is to provide the possibility for the nominal strategy to escape from pure strategy profiles, that is profiles at which all agents assign probability one in one of the actions. Setting $\lambda>0$ is essential for providing an adaptive response of the algorithm to changes in the environment.

%~~~~~~~~~~~~~~~~~~~~~~~~~~~~~~~~~~~~~~~~~~~~~~~~~~~~~~~~~~~~~~~~~~~~~~~~~~~~~~~~~~~
\section{Convergence Analysis}		\label{sec:ConvergenceAnalysis}

In this section, we establish a connection between the asymptotic behavior of the nominal strategy profile $x(k)$ with the Nash equilibria of the assignment game, when the utility function $u_i$ for each thread $i$ is defined by (\ref{eq:AssignmentGameUtility}) and the objective is  given by either (O1) or (O2). Let us denote $\mathcal{S}^{\lambda}$ to be the set of \emph{stationary points} of the mean-field dynamics (cf.,~\cite{KushnerYin03}) of the recursion (\ref{eq:StrategyUpdate}) (when the projection operator has been skipped), defined as follows
\begin{equation*}	\label{eq:StationaryPoints}
\mathcal{S}^{\lambda}\df \left\{ x\in\SPROFILE : g_i^{\lambda}(x)\df\mathbb{E}\left[u_i(\alpha(k))[e_{\alpha_i(k)} - x_i(k)]|x(k)=x\right] = 0, \forall i\in\mathcal{I} \right\}.
\end{equation*}
The expectation operator $\mathbb{E}[\cdot]$ is defined appropriately over the canonical path space $\Omega=\SPROFILE^{\infty}$ with an element $\omega$ being a sequence $\{x(0),x(1),...\}$ with $x(k)=(x_1(k),...,x_{n}(k))\in\SPROFILE$  generated by the reinforcement learning process. Similarly we define the probability operator $\mathbb{P}[\cdot]$. In other words, the set of stationary points corresponds to the strategy profiles at which the expected change in the strategy profile is zero.

According to \cite{ChasparisShamma11_DGA,ChasparisShammaRantzer14}, a connection can be established between the set of stationary points $\mathcal{S}^{\lambda}$ and the set of Nash equilibria of the assignment game. In particular, for sufficiently small $\lambda>0$, \emph{the set of $\mathcal{S}^{\lambda}$ includes only $\lambda$-perturbations of Nash-equilibrium strategies}. This is due to the fact that the mean-field dynamics $\{g_i^{\lambda}(\cdot)\}_i$ are continuously differentiable functions with respect to $\lambda$.\footnote{For more information regarding the sensitivity of stationary points to $\lambda>0$, see \cite[Lemma~6.2]{ChasparisShammaRantzer14}.}

The following proposition is a straightforward extension of \cite[Theorem~1]{ChasparisShammaRantzer14} to the case of constant step-size.
\begin{proposition}	\label{Pr:InfinitelyOftenVisits}
Let the \RM\ employ the strategy update rule (\ref{eq:StrategyUpdate}) and placement selection (\ref{eq:SelectionRule}) for each thread $i$. Updates are performed periodically with a fixed period such that $\tilde{v}_i(k)>0$ for all $i$ and $k$. Let the utility function for each thread $i$ satisfy (\ref{eq:AssignmentGameUtility}) under either objective (O1) or (O2), where $\gamma\geq{0}$ is small enough such that $u_i(\alpha(k))>{0}$ for all $k$. 

Then, for some $\lambda>0$ sufficiently small, there exists $\delta=\delta(\lambda)$, with $\delta(\lambda)\downarrow{0}$ as $\lambda\downarrow{0}$, such that
\begin{equation}	\label{eq:Convergence}
\mathbb{P}\left[\liminf_{k\to\infty} \mathsf{dist}(x(k),\mathcal{B}_{\delta}(\mathcal{S}^{\lambda}))=0\right]=1.
\end{equation}
\end{proposition}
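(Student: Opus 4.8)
The plan is to view the nominal-strategy recursion (\ref{eq:StrategyUpdate})---with the projection skipped, which is legitimate since the positivity hypotheses $\tilde v_i(k)>0$ and $u_i(\alpha(k))>0$ together with small $\epsilon$ keep each $x_i(k)$ inside $\SIMPLEX{\magn{\mathcal{A}_i}}$---as a stochastic-approximation scheme with \emph{constant} step-size, and to transfer the asymptotic behaviour of its mean-field ODE to the discrete Markov chain $\{x(k)\}$. First I would rewrite the recursion in canonical form
\begin{equation*}
x_i(k+1) = x_i(k) + \epsilon\, g_i^\lambda(x(k)) + \epsilon\, M_i(k+1),
\end{equation*}
where $g_i^\lambda$ is exactly the mean-field map defining $\mathcal{S}^\lambda$ and $M_i(k+1)\df u_i(\alpha(k))[e_{\alpha_i(k)}-x_i(k)] - g_i^\lambda(x(k))$ is a martingale difference with respect to the natural filtration, bounded because $\SPROFILE$ is compact and $u_i$ is uniformly bounded. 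By construction the rest points of the associated ODE $\dot x = g^\lambda(x)$ coincide with $\mathcal{S}^\lambda$.

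The crux is a drift (Lyapunov) function. Because the utilities are identical, $u_i(\alpha)=f(\alpha,w)$ for every $i$ (cf.~(\ref{eq:AssignmentGameUtility})), the assignment game is a potential game with potential $f$; the unperturbed ($\lambda=0$) mean dynamics then perform ascent on $f$, i.e.\ $\nabla f(x)\tr g^0(x)\geq 0$ with equality precisely on the rest set $\mathcal{S}^0$, which by Section~\ref{sec:EfficientAllocationsVSNashEquilibria} contains the Nash equilibria. Outside any neighbourhood of $\mathcal{S}^0$ this inner product is bounded below by a positive constant, so by the continuous dependence of $g^\lambda$ on $\lambda$ (the footnoted \cite[Lemma~6.2]{ChasparisShammaRantzer14}, which also controls how $\mathcal{S}^\lambda$ moves with $\lambda$) I would fix, for each small $\lambda>0$, a radius $\delta=\delta(\lambda)$ with $\delta(\lambda)\downarrow 0$ as $\lambda\downarrow 0$ such that $\nabla f(x)\tr g^\lambda(x)\geq\rho>0$ for every $x\in\SPROFILE\setminus\mathcal{B}_\delta(\mathcal{S}^\lambda)$, with $\rho=\rho(\delta)>0$.

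With the drift estimate in hand the conclusion follows from a Foster--Lyapunov recurrence argument adapted to constant step-size. Taylor-expanding $f$ along the recursion and taking conditional expectation, the martingale term vanishes while the second-order remainder and the noise variance are $O(\epsilon^2)$ uniformly on the compact domain, so on $\{x(k)\notin\mathcal{B}_\delta(\mathcal{S}^\lambda)\}$ one obtains $\mathbb{E}[f(x(k+1))\mid x(k)]\geq f(x(k))+\epsilon\rho-C\epsilon^2$. Fixing $\epsilon$ small enough that $C\epsilon^2<\tfrac12\epsilon\rho$, the bounded functional $f$ drifts strictly upward whenever the chain is outside the neighbourhood; since $f$ is bounded on $\SPROFILE$ it cannot do so indefinitely, which forces the first-entry time into $\mathcal{B}_\delta(\mathcal{S}^\lambda)$ to have finite expectation and hence be finite almost surely. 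Because $\{x(k)\}$ is a time-homogeneous Markov chain, the strong Markov property lets me repeat the argument from each entry, so the chain returns to $\mathcal{B}_\delta(\mathcal{S}^\lambda)$ infinitely often almost surely, which is exactly (\ref{eq:Convergence}).

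The main obstacle I anticipate is this last, constant-step-size step rather than the ODE analysis. Under a decreasing step-size $\{\epsilon_k\}$ with $\sum_k\epsilon_k=\infty$ and $\sum_k\epsilon_k^2<\infty$ the noise is asymptotically negligible and \cite[Theorem~1]{ChasparisShammaRantzer14} delivers genuine almost-sure convergence; here $\sum_k\epsilon^2=\infty$, so the perturbation never dies out and the strongest attainable statement is the infinitely-often (liminf) recurrence rather than convergence. The delicate points are to bound the $O(\epsilon^2)$ remainder uniformly so that the first-order drift dominates it on the \emph{whole} region $\SPROFILE\setminus\mathcal{B}_\delta(\mathcal{S}^\lambda)$, and to verify that $\rho(\delta)$ stays bounded away from zero there; both rest on the compactness of $\SPROFILE$ and the smoothness of $f$ and $g^\lambda$.
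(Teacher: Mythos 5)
Your architecture (constant-step-size stochastic approximation, the expected potential $F(x)\df\mathbb{E}_x[f(\alpha,w)]$ as a drift function, a Foster--Lyapunov recurrence estimate, and the strong Markov property to iterate it) is sensible, and your preliminary points are fine: the projection can indeed be dropped, the martingale decomposition is correct, and $\{x(k)\}$ is a time-homogeneous Markov chain. The gap is in the key inequality $\nabla F(x)\tr g^{\lambda}(x)\geq\rho>0$ for all $x\in\SPROFILE\setminus\mathcal{B}_{\delta}(\mathcal{S}^{\lambda})$, which you deduce from $\nabla F\tr g^{0}\geq 0$ ``with equality precisely on $\mathcal{S}^{0}$'' plus continuity in $\lambda$. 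The trouble is that $\mathcal{S}^{0}$ is strictly larger than the set of Nash equilibria: for the unperturbed (replicator-type) mean field $g^{0}_{ij}(x)=x_{ij}(U_{ij}(x)-u_i(x))$, \emph{every} pure profile is a rest point, Nash or not, as is every mixed profile at which each agent is indifferent on its support. For small $\lambda$, $\mathcal{S}^{\lambda}$ retains only perturbations of Nash equilibria, so a non-Nash pure profile $a$ lies outside $\mathcal{B}_{\delta}(\mathcal{S}^{\lambda})$; there $g^{0}(e_a)=0$, the zeroth-order drift gives you nothing, and the $O(\lambda)$ correction need not be positive. Explicitly,
\begin{equation*}
\nabla_{x_i}F(e_a)\tr g_i^{\lambda}(e_a)=\frac{\lambda}{\magn{\mathcal{A}_i}}\sum_{j\neq a_i}u_i(j,a_{-i})\bigl[u_i(j,a_{-i})-u_i(a_i,a_{-i})\bigr]+O(\lambda^2),
\end{equation*}
and with three actions and positive payoffs, say $u_i(a_i,a_{-i})=1$, $u_i(j',a_{-i})=1.05$, $u_i(j'',a_{-i})=0.1$, this is negative even though $a$ is not a Nash equilibrium. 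So the expected potential strictly \emph{decreases} at points outside $\mathcal{B}_{\delta}(\mathcal{S}^{\lambda})$, the set $\{x:\nabla F(x)\tr g^{\lambda}(x)\leq 0\}$ genuinely exceeds $\mathcal{B}_{\delta}(\mathcal{S}^{\lambda})$, and your uniform-drift step collapses; compactness and smoothness cannot rescue it.

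What is missing is exactly the hard part of the result the paper leans on: showing that the perturbed process does not linger near the non-Nash rest points of the \emph{unperturbed} dynamics. The paper itself offers no more than a one-line reduction to the first part of \cite[Theorem~1]{ChasparisShammaRantzer14} with the decreasing step size replaced by a constant $\epsilon$ (invoking \cite{KushnerYin03} for the constant-step-size machinery), but the content of that theorem is precisely the treatment of these degenerate rest points under the $\lambda$-perturbation. To make your argument self-contained you would need either a separate escape estimate from neighbourhoods of non-Nash elements of $\mathcal{S}^{0}$ --- e.g., using the $\lambda$-exploration to show that a strictly better reply is sampled with probability bounded below, so that $F$ increases by a fixed amount with positive probability within a bounded number of steps, and then iterating via the Markov property --- or a Lyapunov function modified near those points so that its drift is positive everywhere off $\mathcal{B}_{\delta}(\mathcal{S}^{\lambda})$. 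As written, the proof does not go through.
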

\begin{proof}
The proof follows the exact same steps of the first part of \cite[Theorem~1]{ChasparisShammaRantzer14}, where the decreasing step-size sequence is being replaced by a constant $\epsilon>0$.
\end{proof}
Proposition~\ref{Pr:InfinitelyOftenVisits} states that when we select $\lambda$ sufficiently small, the nominal strategy trajectory will be approaching the set $\mathcal{B}_{\delta}(\mathcal{S}^{\lambda})$ infinitely often with probability one, that is a small neighborhood of the Nash equilibria. We require that the update period is large enough so that each thread is using resources within each evaluation period. Of course, if a thread stops executing then the same result holds but for the updated set of threads. 

However, the above proposition does not provide any guarantees regarding the time fraction that the process spends in any Nash equilibrium. The following proposition establishes this connection. 

\begin{proposition}[Weak convergence to Nash equilibria]	\label{Pr:WeakConvergence}
\textit{
Under the hypotheses of Proposition~\ref{Pr:InfinitelyOftenVisits}, the fraction of time that the nominal strategy profile $x(k)$ spends in $\mathcal{B}_{\delta}(\mathcal{S}^{\lambda})$ goes to one (in probability) as $\epsilon\to{0}$ and $k\to\infty$.
}
\end{proposition}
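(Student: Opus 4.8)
The plan is to establish this as a \emph{weak convergence} (in the sense of Kushner--Yin \cite{KushnerYin03}) result for the constant-step-size stochastic approximation recursion (\ref{eq:StrategyUpdate}). The central object is the sequence of interpolated processes indexed by $\epsilon$: for each step size $\epsilon$, define the piecewise-constant continuous-time interpolation $x^{\epsilon}(t)$ of the discrete iterates $\{x(k)\}$, where the iterate index $k$ is mapped to time $t = \epsilon k$. The key structural fact I would exploit is that the recursion has the standard form $x(k+1) = x(k) + \epsilon\, g^{\lambda}(x(k)) + \epsilon\, M(k)$, where $g^{\lambda}(\cdot)=\{g_i^{\lambda}(\cdot)\}_i$ is exactly the mean-field vector field whose zero set defines $\mathcal{S}^{\lambda}$, and $M(k)$ is a martingale-difference noise term (the difference between the realized update and its conditional expectation). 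Because the utilities $u_i(\alpha(k))$ are uniformly bounded and positive by the hypotheses of Proposition~\ref{Pr:InfinitelyOftenVisits}, both $g^{\lambda}$ and the noise are bounded, and the iterates remain in the compact set $\SPROFILE$.

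First I would verify the standard hypotheses required for the weak-convergence machinery: tightness of the interpolated processes $\{x^{\epsilon}(\cdot)\}$, which follows from boundedness of the drift and noise together with compactness of $\SPROFILE$; and the martingale-difference property of $M(k)$ with bounded conditional second moments. These yield, via the theory in \cite{KushnerYin03}, that as $\epsilon\to 0$ any weak limit of $x^{\epsilon}(\cdot)$ is almost surely a solution of the mean-field ODE $\dot{x} = g^{\lambda}(x)$. Second, I would invoke a limit result on the \emph{asymptotic} behavior: considering the shifted processes $x^{\epsilon}(\cdot + t_{\epsilon})$ for $t_{\epsilon}\to\infty$, the weak limits concentrate on the set of limit points of the ODE, which is contained in (a neighborhood of) the stationary set $\mathcal{S}^{\lambda}$. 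Third, combining this with the Lyapunov/potential structure already available---the identical-interest choice (\ref{eq:AssignmentGameUtility}) makes the assignment game a potential game, so $f(\alpha,w)$ (expected over the strategy profile) serves as a Lyapunov function for the mean-field flow, ruling out limit cycles and forcing trajectories into $\mathcal{B}_{\delta}(\mathcal{S}^{\lambda})$---I would conclude that the fraction of time spent in $\mathcal{B}_{\delta}(\mathcal{S}^{\lambda})$ tends to one in probability.

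Concretely, the statement ``fraction of time in $\mathcal{B}_{\delta}(\mathcal{S}^{\lambda})$ goes to one in probability'' is formalized by showing that for any $\rho>0$,
\begin{equation*}
\lim_{\epsilon\to 0}\ \limsup_{k\to\infty}\ \mathbb{P}\left[\frac{1}{T}\int_{0}^{T}\mathds{1}\{x^{\epsilon}(s)\notin\mathcal{B}_{\delta}(\mathcal{S}^{\lambda})\}\,ds > \rho\right] = 0,
\end{equation*}
for $T$ large, and this is read off from the fact that the ODE limit spends vanishing time outside any neighborhood of its invariant set. The role of Proposition~\ref{Pr:InfinitelyOftenVisits} here is to guarantee that the process genuinely returns to $\mathcal{B}_{\delta}(\mathcal{S}^{\lambda})$ infinitely often, so the occupation measure is not trapped away from the stationary set.

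The main obstacle I anticipate is the combination of the perturbation $\lambda>0$ with the constant step size $\epsilon>0$: the perturbed selection rule (\ref{eq:SelectionRule}) means the effective dynamics are those of the $\lambda$-perturbed vector field $g^{\lambda}$, and one must keep the two small parameters conceptually separate---$\lambda$ fixes \emph{which} stationary set (a $\lambda$-perturbation of the Nash set) the flow converges to, while $\epsilon\to 0$ controls the \emph{weak convergence} of the stochastic interpolation to that deterministic flow. Care is needed to ensure the characterization of $\mathcal{S}^{\lambda}$ and its neighborhood $\mathcal{B}_{\delta}(\mathcal{S}^{\lambda})$ is uniform enough that the occupation-measure argument closes; rigorously establishing that the potential function is a strict Lyapunov function for $g^{\lambda}$ (so that the only recurrence is near $\mathcal{S}^{\lambda}$, with no spurious invariant sets introduced by the perturbation) is where the real work lies.
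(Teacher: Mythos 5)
Your proposal is correct and follows essentially the same route as the paper: the paper's proof is a direct appeal to the constant-step-size weak-convergence theory of \cite[Theorem~8.4.1]{KushnerYin03} together with the recurrence guarantee of Proposition~\ref{Pr:InfinitelyOftenVisits}, which is exactly the machinery (interpolated processes, tightness, ODE limit, occupation measures) you unpack. Your additional observation that the identical-interest/potential structure is what forces the ODE's limit set to coincide with $\mathcal{S}^{\lambda}$ is a useful detail the paper leaves implicit.
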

\begin{proof}
The proof follows directly from \cite[Theorem~8.4.1]{KushnerYin03} and Proposition~\ref{Pr:InfinitelyOftenVisits}.
\end{proof}

Proposition~\ref{Pr:WeakConvergence} states that if we take a small step size $\epsilon>0$, then as the time index $k$ increases, we should expect that the nominal strategy spends the majority of the time within a small neighborhood of the Nash equilibrium strategies. According to Section~\ref{sec:EfficientAllocationsVSNashEquilibria}, we know that when the utility function for each thread is defined according to (\ref{eq:AssignmentGameUtility}), then \emph{the set of Nash equilibria includes the set of efficient assignments}, i.e., the solutions of (\ref{eq:CentralizedObjective}). Thus, due to Proposition~\ref{Pr:WeakConvergence}, it is guaranteed that the nominal strategies $x_i(k)$, $i\in\mathcal{I}$, will spend the majority of the time in a small neighborhood of locally-optimal assignments, which provides a minimum performance guarantee throughout the running time of the parallelized application. 

Note that due to varying exogenous factors, the Nash-equilibrium assignments may not stay fixed for all future times. The above proposition states that the process will spend the majority of the time within the set of the Nash-equilibrium assignments for as long as this set is fixed. If, at some point in time, this set changes (due to, e.g., other applications start running on the same platform), then the above result continues to hold but for the new set of Nash equilibria. Hence, the process is adaptive to possible performance variations.

%%%%%%%%%%%%%%%%%%%%%%%%%%%%%%%%%%%%%%%%%%%%%%%%%%%%%%%%%%%%%%%%%%%%%%%%%%%%%%%%%%%%%%%%%%%%%%%%%%%%%
\section{Experiments}		\label{sec:Experiments}

In this section, we present an experimental study of the proposed reinforcement learning scheme for dynamic pinning of parallelized applications. Experiments were conducted on \texttt{20$\times$Intel\copyright Xeon\copyright CPU E5-2650 v3 \@ 2.30 GHz} running Linux Kernel 64bit 3.13.0-43-generic. The machine divides the physical cores into two NUMA nodes (Node 1: 0-9 CPU's, Node 2: 10-19 CPU's).

\subsection{Experimental Setup}

We consider a computationally intensive routine that executes a fixed number of computations (corresponding to the combinations of $M$ out of a set of $N>M$ numbers). The routine is being parallelized using the \texttt{pthread.h} (C++ POSIX thread library), where each thread is executing a replicate of the above set of computations. The nature of these computations does not play any role and in fact it may vary between threads (as we shall see in both of the forthcoming experiments).

Throughout the execution, and with a fixed period of $0.3$ sec, the \RM\ collects measurements of the total instructions per sec (using the PAPI library \cite{Mucci99}) for each one of the threads separately. Given the provided measurements, the update rule of Equation~(\ref{eq:StrategyUpdate}) with the utility function (\ref{eq:AssignmentGameUtility}) under (O2) is executed by the \RM. Placement of the threads to the available CPU's is achieved through the \texttt{sched.h} library (in particular, the \texttt{pthread\_setaffinity\_np} function). In the following, we demonstrate the response of the RL scheme in comparison to the Operating System (OS) response (i.e., when placement of the threads is not controlled by the \RM). We compare them for different values of $\gamma\geq{0}$ in order to investigate the influence of more balanced speeds to the overall running time.

In all the forthcoming experiments, the \RM\ is executed within the master thread which is always running in the first available CPU (CPU~1). Furthermore, in all experiments, only the first one of the two NUMA nodes are utilized, since our intention is to demonstrate the potential benefit of an efficient thread placement when the effect of memory placement is rather small.

\begin{figure}[h!]
\centering
\begin{minipage}{0.99\textwidth}
\includegraphics{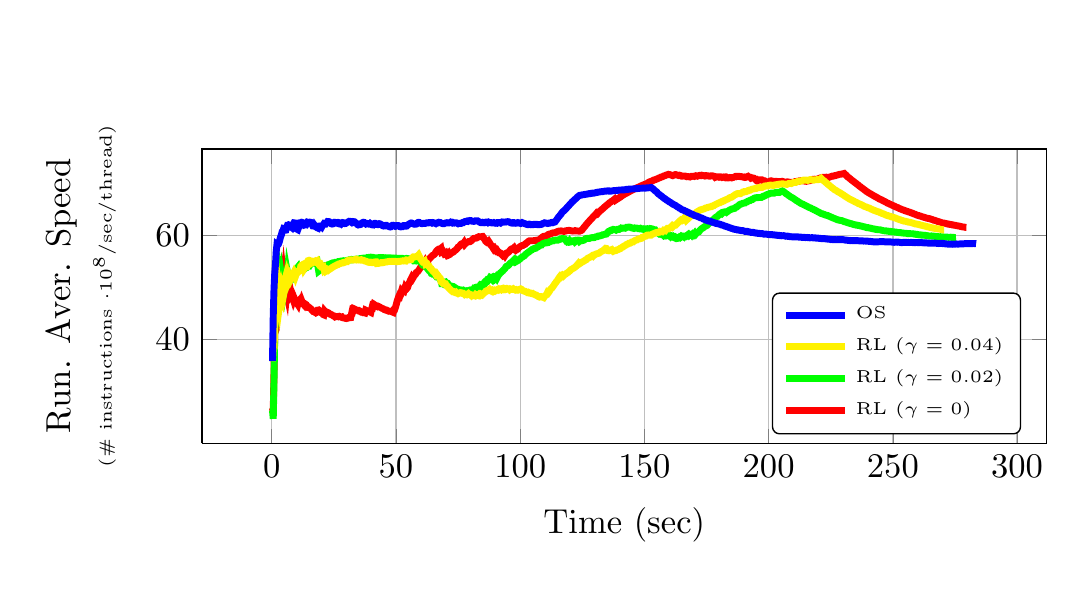} \\[-20pt]
\includegraphics{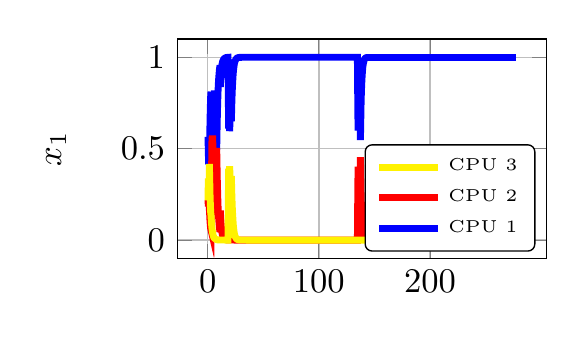}
\includegraphics{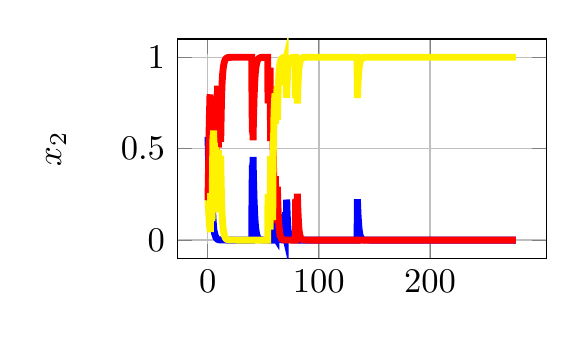} \\[-20pt]
\includegraphics{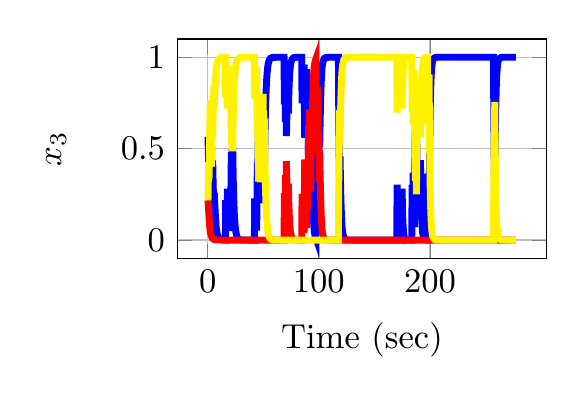}
\includegraphics{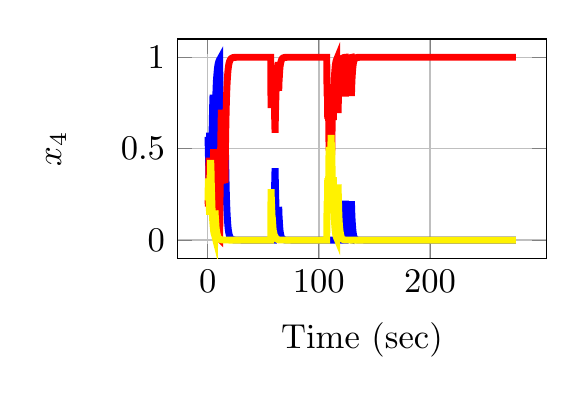}
\end{minipage}
\caption{Running average execution speed when 4 threads run on 3 identical CPU's. Thread 3 requires about half the computing bandwidth compared to the rest of the threads which are identical. The strategies correspond to the RL scheme with $\gamma=0.04$. The RL schemes run with $\epsilon=0.005$ and $\lambda=0.005$.}
\label{fig:Experiment3}
\end{figure}

\subsection{Experiment 1: Non-Identical Threads under Limited Resources}	\label{sec:Experiment3}

In this experiment, we consider the case of limited resources (i.e., when the number of threads is larger than the number of available CPU's). However, one of the threads requires CPU time with smaller frequency than the rest of the threads (i.e., executes its computations with smaller frequency). We should expect that in an optimal setup, threads that do not require CPU time often should be the ones sharing a CPU. On the other hand, threads that require larger bandwidth, they should be placed alone. 

In particular, in this experiment, Thread 3 requires about half the computing bandwidth compared to the rest of the threads (Thread 1, 2 and 4). The resulting performance is depicted in Figure~\ref{fig:Experiment3}.

We observe indeed that Threads 1, 2 \& 4 (which require larger computing bandwidths) are allocated to different CPU's (CPU 1, 3 and 2, respectively). On the other hand, Thread 3 is switching between CPU 1 and CPU 3, since both provide almost equal processing bandwidth to Thread 3. In other words, the less demanding application is sharing the CPU with one of the more demanding threads. Note that this assignment corresponds to a Nash equilibrium (as Proposition~\ref{Pr:WeakConvergence} states), since there is no thread that can benefit by changing its strategy. It is also straightforward to check that this assignment is also efficient.

Note, finally, that the difference with the processing speed of the OS scheme is small, although a more balanced processing speed ($\gamma=0.04$) improved slightly the overall completion time.

%\subsection{Minimizing Variance}
%
%
%\begin{figure}[t!]
%\centering
%\begin{minipage}{0.99\textwidth}
%\includegraphics{./simulations/min_variance_8tx3c_RL_gamma002/AveSpeedComparison.pdf} \\
%\includegraphics{./simulations/min_variance_8tx3c_RL_gamma002/AveBalancedSpeedComparison.pdf}
%\end{minipage}
%\caption{Running average execution speed when 8 identical threads run on 3 identical CPU cores.}
%\end{figure}

\begin{figure}[t!]
\centering
\begin{minipage}{0.99\textwidth}
\includegraphics{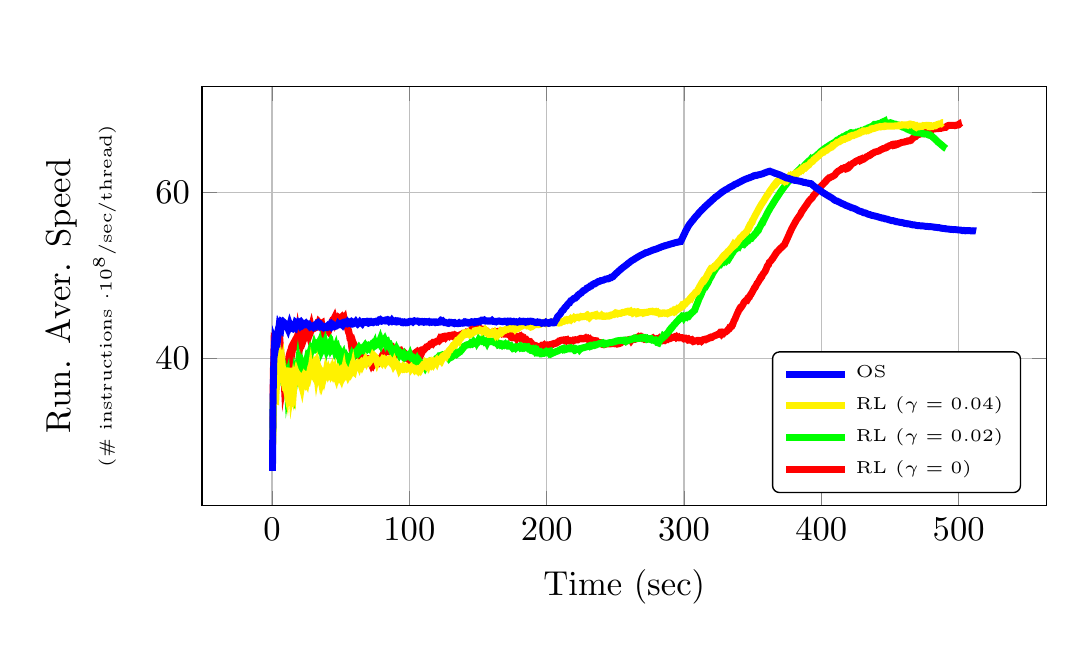}
\end{minipage}
\caption{Running average execution speed when 7 non-identical threads run on 3 CPU cores. Threads~1 \& 2 require about half the computing bandwidth compared to the rest of the threads (which are identical). Thread 3 is joining after 120 sec. The RL schemes run with $\epsilon=0.003$ and $\lambda=0.005$.}
\label{fig:Experiment5}
\end{figure}

\subsection{Experiment 2: Non-Identical Threads in a Dynamic Environment}		\label{sec:Experiment4}

In this experiment, we demonstrate the robustness of the algorithm in a dynamic environment. We consider 7 threads. The first two (Threads 1 \& 2) require about half the computing bandwidth compared to the rest. The rest of the threads (Threads 3, 4, 5, 6 and 7) are identical. However, Thread 3 starts running later in time (in particular, after 120 sec). 

Figure~\ref{fig:Experiment5} illustrates the evolution of the RL scheduling scheme under different values of $\gamma$. Again in this case, a fastest response of the overall application can be achieved when higher values of $\gamma$ are selected. The difference should be attributed to the fact that the OS fails to distinguish between threads with different bandwidth requirements. Table~\ref{Tb:ComparisonOSwithRL} presents a statistical analysis of these schemes where the speed difference between the RL ($\gamma=0.04$) and the OS reaches approximately 5\% on average.

\begin{table}[h!]
\centering
\begin{tabular}{c|c|c|c|c}
Run \#			& 	OS 					&  RL ($\gamma=0$) 	& 	RL ($\gamma=0.02$)	 & RL ($\gamma=0.04$)			\\\hline\hline
1 				&  513 sec				& 	505 sec			&  492 sec 				 & 489 sec						\\\hline
2				&  530 sec				& 	506 sec			&  489 sec				 & 494 sec						\\\hline
3 				&  536 sec  			& 	517	sec			&  518 sec				 & 515 sec 						\\\hline
4 				&  533 sec  			& 	507	sec			&  515 sec				 & 509 sec						\\\hline
5				&  523 sec				& 	502	sec			&  491 sec  			 & 496 sec						\\\hline
6				&  513 sec				& 	523	sec			&  501 sec				 & 492 sec						\\\hline
7				&  520 sec				& 	514	sec			&  497 sec				 & 492 sec						\\\hline
8				&  530 sec				& 	518	sec			&  499 sec				 & 497 sec						\\\hline
9				&  520 sec				& 	532	sec			&  500 sec				 & 497 sec						\\\hline
10				&  528 sec				& 	517	sec			&  493 sec				 & 492 sec						\\\hline\hline
\textbf{aver.}	&  \textbf{524.6 sec}	& 	\textbf{514.10}	&  \textbf{499.5 sec}	 & \textbf{497.3}				\\\hline\hline
\textbf{s.d.}	&  \textbf{8.06 sec}	& 	\textbf{9.29}	&  \textbf{9.85	sec}	 & \textbf{8.27}				\\\hline
\end{tabular}
\caption{Comparison between the OS performance and RL schemes when $\epsilon=0.003$ and $\lambda=0.005$ for different values of $\gamma$ under Experiment 2.}
\label{Tb:ComparisonOSwithRL}
\end{table}

%%%%%%%%%%%%%%%%%%%%%%%%%%%%%%%%%%%%%%%%%%%%%%%%%%%%%%%%%%%%%%%%%%%%%%%%%%%%%%%%%%%%%%%%%%%%%%%%%%%%%
\section{Conclusions} \label{sec:Conclusions}

We proposed a measurement-based learning scheme for addressing the problem of efficient dynamic pinning of parallelized applications into processing units. According to this scheme, a centralized objective is decomposed into thread-based objectives, where each thread is assigned its own utility function. A \RM\ updates a strategy for each one of the threads corresponding to its beliefs over the most beneficial CPU placement for this thread. Updates are based on a reinforcement learning rule, where prior actions are reinforced proportionally to the resulting utility. It was shown that, when we appropriately design the threads' utilities, then convergence to the set of locally optimal assignments is achieved. Besides its reduced computational complexity, the proposed scheme is adaptive and robust to possible changes in the environment.

%==============================================================
%\bibliographystyle{spmpsci.bst} % IEEEtran
\bibliographystyle{splncs03}
\bibliography{2016_HLPP_Bibliography}

\begin{thebibliography}{10}
\providecommand{\url}[1]{\texttt{#1}}
\providecommand{\urlprefix}{URL }

\bibitem{DeAngelis13}
Angelis, F.D., Fuselli, D., Squartini, S., Piazza, F., Wei, Q.: Optimal home
  energy management for residential appliances via stochastic optimization and
  robust optimization. IEEE Transactions on Smart Grid  3(4) (2012)

\bibitem{Bini11}
Bini, E., Buttazzo, G.C., Eker, J., Schorr, S., Guerra, R., Fohler, G.,
  {\AA}rz{\'e}n, K.E., Vanessa, R., Scordino, C.: Resource management on
  multicore systems: The {ACTORS} approach. IEEE Micro  31(3),  72--81 (2011)

\bibitem{Brecht93}
Brecht, T.: On the importance of parallel application placement in {NUMA}
  multiprocessors. In: Proceedings of the Symposium on Experiences with
  Distributed and Multiprocessor Systems (SEDMS IV). pp. 1--18. San Deigo, CA
  (Jul 1993)

\bibitem{Broquedis10}
Broquedis, F., Furmento, N., Goglin, B., Wacrenier, P.A., Namyst, R.:
  {F}orest{GOMP}: An efficient {O}pen{MP} environment for {NUMA} architectures.
  International Journal Parallel Programming  38,  418--439 (2010)

\bibitem{ChasparisMaggio16}
Chasparis, G.C., Maggio, M., Bini, E., \AA{}rz\'{e}n, K.E.: Design and
  implementation of distributed resource management for time-sensitive
  applications. Automatica  64,  44--53 (2016)

\bibitem{ChasparisShammaRantzer14}
Chasparis, G.C., Shamma, J.S., Rantzer, A.: Nonconvergence to saddle boundary
  points under perturbed reinforcement learning. International Journal of Game
  Theory  44(3),  667--699 (2015)

\bibitem{ChasparisShamma11_DGA}
Chasparis, G., Shamma, J.: Distributed dynamic reinforcement of efficient
  outcomes in multiagent coordination and network formation. Dynamic Games and
  Applications  2(1),  18--50 (2012)

\bibitem{Inaltekin05}
Inaltekin, H., Wicker, S.: A one-shot random access game for wireless networks.
  In: International Conference on Wireless Networks, Communications and Mobile
  Computing (2005)

\bibitem{Klug11}
Klug, T., Ott, M., Weidendorfer, J., Trinitis, C.: \texttt{autopin} - automated
  optimization of thread-to-core pinning on multicore systems. In: Stenstrom,
  P. (ed.) Transactions on High-Performance Embedded Architectures and
  Compilers III, Lecture Notes in Computer Science, vol. 6590, pp. 219--235.
  Springer Berlin Heidelberg (2011)

\bibitem{KushnerYin03}
Kushner, H.J., Yin, G.G.: Stochastic Approximation and Recursive Algorithms and
  Applications. Springer-Verlag New York, Inc., 2nd edn. (2003)

\bibitem{Mucci99}
Mucci, P.J., Browne, S., Deane, C., Ho, G.: {PAPI}: A portable interface to
  hardware performance counters. In: Proceedings of the Department of Defense
  HPCMP Users Group Conference. pp. 7--10 (1999)

\bibitem{Narendra89}
Narendra, K., Thathachar, M.: Learning Automata: An introduction. Prentice-Hall
  (1989)

\bibitem{Olivier11}
Olivier, S., Porterfield, A., Wheeler, K.: Scheduling task parallelism on
  multi-socket multicore systems. In: ROSS'11. pp. 49--56. Tuscon, Arizona, USA
  (2011)

\bibitem{Sub08}
Subrata, R., Zomaya, A.Y., Landfeldt, B.: A cooperative game framework for
  {QoS} guided job allocation schemes in grids. IEEE Transactions on Computers
  57(10),  1413--1422 (Oct 2008)

\bibitem{Tembine09}
Tembine, H., Altman, E., ElAzouri, R., Hayel, Y.: Correlated evolutionary
  stable strategies in random medium access control. In: International
  Conference on Game Theory for Networks. pp. 212--221 (2009)

\bibitem{Thibault07}
Thibault, S., Namyst, R., Wacrenier, P.: Building portable thread schedulers
  for hierarchical multi-processors: the {B}ubble{S}ched {F}ramework. In:
  Euro-Par. ACM. Rennes, France (2007)

\bibitem{Wei10}
Wei, G., Vasilakos, A.V., Zheng, Y., Xiong, N.: A game-theoretic method of fair
  resource allocation for cloud computing services. The Journal of
  Supercomputing  54(2),  252--269 (Nov 2010)

\end{thebibliography}

\end{document}